\newenvironment{varalgorithm}[1]
  {\algorithm}
  {\endalgorithm}
\newenvironment{list4}{
	\begin{list}{$\bullet$}{%
			\setlength{\itemsep}{0.05cm}
			\setlength{\labelsep}{0.2cm}
			\setlength{\labelwidth}{0.3cm}
			\setlength{\parsep}{0in} 
			\setlength{\parskip}{0in}
			\setlength{\topsep}{0in} 
			\setlength{\partopsep}{0in}
			\setlength{\leftmargin}{0.16in}}}
	{\end{list}}
\let\mathbb=\mathds 
\newtheorem{theorem}{Theorem}
\newtheorem{defn}{Definition}
\newtheorem{assum}{Assumption}
\newtheorem{remark}{Remark}
\newtheorem{lemma}{\bfseries Lemma}
\begin{document}

\title{\LARGE \bf Optimal Database Allocation in Finite Time with Efficient Communication and Transmission Stopping over Dynamic Networks}

\author{Apostolos~I.~Rikos, Christoforos N. Hadjicostis, and Karl~H.~Johansson
\thanks{Apostolos~I.~Rikos and  K.~H.~Johansson are with the Division of Decision and Control Systems, KTH Royal Institute of Technology, SE-100 44 Stockholm, Sweden. They are also affiliated with Digital Futures, SE-100 44 Stockholm, Sweden. E-mails: {\tt \{rikos,kallej\}@kth.se}.}
\thanks{C. N. Hadjicostis is with the Department of Electrical and Computer Engineering, University of Cyprus, 1678 Nicosia, Cyprus.  E-mail:{\tt~chadjic@ucy.ac.cy}.}
\thanks{This work was supported by the Knut and Alice Wallenberg Foundation and the Swedish Research Council.}
}

\maketitle
\thispagestyle{empty}
\pagestyle{empty}

%
%
%
%
\begin{abstract}
In this paper, we focus on the problem of data sharing over a wireless computer network (i.e., a wireless grid). 
Given a set of available data, we present a distributed algorithm which operates over a dynamically changing network, and allows each node to calculate the optimal allocation of data in a finite number of time steps. 
We show that our proposed algorithm (i) converges to the optimal solution in finite time with very high probability, and (ii) once the optimal solution is reached, each node is able to cease transmissions without needing knowledge of a global parameter such as the network diameter. 
Furthermore, our algorithm (i) operates exclusively with quantized values (i.e., each node processes and transmits quantized information), (ii) relies on event-driven updates, and (iii) calculates the optimal solution in the form of a quantized fraction which avoids errors due to quantization. 
Finally, we demonstrate the operation, performance, and potential advantages of our algorithm over random dynamic networks. 
\end{abstract}

\begin{IEEEkeywords} 
database allocation, optimization, distributed algorithms, quantization, dynamic network, finite time convergence
\end{IEEEkeywords}

%
%
%
%
\section{Introduction}\label{sec:intro}



Wireless computer networks (or wireless grids) comprise of different electronic devices (or nodes), which share their resources with other devices in a distributed manner. 
Various users or devices may request access to one device's stored data. 
In order to reduce the average waiting time for users or devices who request access, data allocation is the procedure of allocating the available data to different nodes according to their available memory capacity such that specific performance objectives are achieved. 

In wireless computer networks, data comprise an important resource that needs to be managed efficiently. 
Optimal allocation of data can heavily influence the operational performance of the network \cite{2009:Wang_Jea}. 
In general, resource allocation can be formulated as an optimization problem but solving the optimal allocation problem over dynamic networks with quantized communication is challenging due to the heterogeneity of the network and the nonlinear nature of the communication constraints. 
Centralized solutions consider gathering the available data to a central scheduler; however, these solutions are not ideal as they lack scalability and they impose heavy computational and storage requirements on the central scheduler. 
For this reason, there has been interest towards distributed algorithms that solve the optimal allocation problem \cite{2020:Themis_Kalyvianaki, 2020:Doostmohammadian_Charalambous, 2021:Rikos_TaskScheduling, 2015:Garcia_Hadjicostis}. 

Distributed optimization has received significant attention recently due to its wide variety of applications \cite{2004:Rabbat, 2012:TsianosLawlorRabbat, 2019:Yang_etal, boyd2006optimal, nedic2018improved, 2010:Johansson, 2010:Nedic, Rabbat:2018IEEEProceedings, 2015:Cherukuri_Cortes}. 
Most works in the current literature assume that network devices process and exchange real values and are able to reach asymptotic convergence within some error \cite{2020:Themis_Kalyvianaki, 2015:Garcia_Hadjicostis}. 
In practical applications of wireless computer networks, devices need to exchange information messages with finite length (i.e., quantized messages) which allows for a more efficient usage of the available network resources (e.g., energy, processing power, etc.). 
Also, they need to operate over networks which may be dynamic due to changes over the sensing radius of the various devices \cite{2020:Doostmohammadian_Charalambous, 2021:Rikos_TaskScheduling}. 
Additionally, in order to preserve available energy resources, it is desirable for devices to converge in finite time and to stop transmitting once convergence has been achieved \cite{2020:Themis_Kalyvianaki, 2021:Rikos_TaskScheduling}. 
In this paper we propose an algorithm that combines all previously mentioned characteristics: it solves the optimal allocation problem over dynamic networks in finite time, while exhibiting transmission stopping capabilities. 

\vspace{.2cm}

\noindent
\textbf{Main Contributions.}
We focus on the problem of data sharing over a wireless computer network (i.e., a wireless grid). 
We aim to balance the data storage between nodes by distributively allocating the available data per available memory in the network. 
We consider the realistic scenario where nodes process and exchange quantized information.  
We also consider that various changes of each node's sensing radius result in dynamically changing connections in the communication network. 
Our algorithm is analyzed for optimal data allocation over a wireless computer network. 
In this scenario, we want to reduce the average waiting time for users or devices who request access to the stored data. 
However, please note that the proposed algorithm could be adopted in a wide variety of other related applications. 
The main contributions of the paper are the following. 
\begin{itemize}
    \item We present a distributed algorithm which solves the optimal data allocation problem over a dynamic network; see Algorithm~\ref{algorithm_max}. 
    \item We show that our algorithm converges in a finite number of time steps, and each node is able to calculate the exact solution without introducing errors due to quantization; see Section~\ref{ConvDynamicAlg}.   
    \item Once our algorithm converges to the optimal solution, each node ceases transmissions without needing knowledge of a global network parameter (e.g., the diameter of the network). 
    Note that this allows our algorithm to require no reinitialization when there is a change over the network (e.g., when a node enters or leaves the network or when the diameter changes); see Section~\ref{dynamic_oper_analysis}. 
    \item We analyze the convergence time of the algorithm and show that it relies on the time-varying connectivity (which is determined by the time needed to communicate among pairs of nodes), rather than the size of the network; see Theorem~\ref{converge_Alg1}. 
    \item We present simulations of our algorithm where we show its finite time convergence to the exact solution and its transmission stopping capabilities; see Section~\ref{sec:results}. 
\end{itemize}
The operation of our proposed algorithm relies on each node's ability to (i) directly transmit a set of values, (ii) broadcast its state, and (iii) remember the sets of nodes which have received the broadcast transmissions. 
Initially, each node broadcasts its state, and also remembers the set of nodes which received the initial broadcast. 
If the dynamic network changes and a new neighboring node appears (which did not receive the current state), then the corresponding node broadcasts again its state and remembers also the new neighboring node. 
However, if a node's state changes, (i) it forgets the set of nodes which received the previous state values, (ii) broadcast its new state, and (iii) remembers only nodes which have received the new state. 
This allows our proposed algorithm to exhibit finite time convergence and transmission stopping when operating over dynamically changing communication networks. 

Unlike our work in this paper, most of the current literature comprises of algorithms which operate with real values and converge asymptotically within some error \cite{2020:Themis_Kalyvianaki, 2015:Garcia_Hadjicostis}. 
Our paper, along with \cite{2021:Rikos_TaskScheduling, 2020:Doostmohammadian_Charalambous}, aims to pave the way for finite time algorithms, which operate solely with quantized values to address resource allocation problems. 
To the authors knowledge, the proposed algorithm is the first algorithm in the current literature which guarantees finite time convergence and transmission stopping for the case where the underlying network is dynamic without needing knowledge of a global parameter such as the diameter of the network (e.g., see \cite{2021:Rikos_TaskScheduling}).

%
%
%
%
\section{NOTATION AND BACKGROUND}\label{sec:preliminaries}

The sets of real, rational, integer and natural numbers are denoted by $ \mathbb{R}, \mathbb{Q}, \mathbb{Z}$ and $\mathbb{N}$, respectively. 
The symbol $\mathbb{Z}_+$ denotes the set of nonnegative integers. 
For any real number $a \in \mathbb{R}$, the floor $\lfloor a \rfloor$ denotes the greatest integer less than or equal to $a$ while the ceiling $\lceil a \rceil$ denotes the least integer greater than or equal to $a$. 


\vspace{.2cm}

\noindent
\textbf{Graph-Theoretic Notions.}
Consider a \textit{dynamic} network of $n$ ($n \geq 2$) nodes communicating only with their immediate neighbors. 
The communication topology can be captured by a dynamic undirected graph, called \textit{dynamic communication graph}. 
A dynamic graph is defined as $\mathcal{G}[k] = (\mathcal{V}, \mathcal{E}[k])$, where $\mathcal{V} =  \{v_1, v_2, \dots, v_n\}$ is the set of nodes and $\mathcal{E}[k] \subseteq \mathcal{V} \times \mathcal{V} - \{ (v_j, v_j) \ | \ v_j \in \mathcal{V} \}$ is the set of edges (self-edges excluded). 
An edge from node $v_i$ to node $v_j$ is denoted by $m_{ji} \triangleq (v_j, v_i) \in \mathcal{E}[k]$, and captures the fact that node $v_j$ and node $v_i$ can exchange information ($v_j$ can transmit to $v_i$ and $v_i$ can transmit to $v_j$) at time step $k$. 
Note here that if $(v_j, v_i) \in \mathcal{E}[k]$ then $(v_i, v_j) \in \mathcal{E}[k]$. 
At time step $k$, the subset of nodes that can directly transmit information to node $v_j$ is called the set of neighbors of $v_j$ and is represented by $\mathcal{N}_j[k] = \{ v_i \in \mathcal{V} \; | \; (v_j,v_i)\in \mathcal{E}[k]\}$. 
The cardinality of $\mathcal{N}_j[k]$ at time step $k$, is called the \textit{degree} of $v_j$ and is denoted by $\mathcal{D}_j[k] = | \mathcal{N}_j^-[k] |$. 
Given a collection of graphs $\mathcal{G}[k] = (\mathcal{V}, \mathcal{E}[k])$ for $k = 1, 2, ..., m$, where $m \in \mathbb{N}$, the \textit{union graph} is defined as $\mathcal{G}^{1, 2, ..., m}_d = (\mathcal{V}, \cup_{k = 1}^{m} \mathcal{E}[k])$. 
A collection of graphs is said to be \textit{jointly connected}, if its corresponding union graph $\mathcal{G}^{1, 2, ..., m}$ forms a connected graph (i.e., for each pair $v_j, v_i \in \mathcal{V}$, $v_j \neq v_i$, there exists a \textit{path}\footnote{A \textit{path} from $v_i$ to $v_j$ exists between time steps $k, k+1, ..., k + \tau$ if we can find a sequence of vertices $v_i \equiv v_{l_0},v_{l_1}, \dots, v_{l_t} \equiv v_j$ such that $(v_{l_{\tau+1}},v_{l_{\tau}}) \in \mathcal{E}[k + \tau]$ for $ \tau = 0, 1, \dots , t-1$.} from $v_i$ to $v_j$).


\vspace{.2cm}

\noindent
\textbf{Node Operation.}
The operation of each node $v_j \in \mathcal{V}$ respects the quantization of information flow.  
At time step $k \in \mathbb{Z}_+$ (where $\mathbb{Z}_+$ is the set of nonnegative integers), each node $v_j$ maintains the mass variables $y_j[k] \in \mathbb{Z}$ and $z_j[k] \in \mathbb{Z}_+$, which are used to communicate with other nodes by either transmitting or receiving messages; the state variables $y^s_j[k] \in \mathbb{Z}$, $z^s_j[k] \in \mathbb{Z}_+$ and $q_j^s[k] = \frac{y_j^s[k]}{z_j^s[k]}$, which are used to store the received messages and calculate result of the optimization operation; the transmission variables $S\_br_j \in \mathbb{N}$ and $M\_tr_j \in \mathbb{N}$, which are used to decide whether $v_j$ will broadcast its state variables or transmit its mass variables via a direct transmission. 

For the case where each node $v_j$ is required to perform a direct transmission, we assume that $v_j$ is aware of its out-neighbors and can directly transmit messages to each out-neighbor separately. 
In the proposed distributed algorithm, in order to randomly determine which out-neighbor to transmit to, each node $v_j$ assigns a nonzero probability $b_{lj}[k]$ to each of its edges $m_{lj}$ where $v_l \in \mathcal{N}_j[k]$ (note that there is always a virtual self-edge which means that the probability assigned to the self-edge is nonzero). 
For every node, this probability assignment can be captured by an $n \times n$ column stochastic matrix $\mathcal{B}[k] = [b_{lj}[k]]$. 
A simple choice is to set these probabilities to be equal, i.e.,
\begin{align*}
b_{lj}[k] = \left\{ \begin{array}{ll}
         \frac{1}{\mathcal{D}_j^+[k]+1}, & \mbox{if $l = j$ or $v_{l} \in \mathcal{N}_j[k]$,}\\
         0, & \mbox{otherwise.}\end{array} \right. 
\end{align*}
Each nonzero entry $b_{lj}[k]$ of matrix $\mathcal{B}[k]$ represents the probability of node $v_j$ transmitting towards out-neighbor $v_l \in \mathcal{N}_j[k]$ through the edge $m_{lj}$. 


\vspace{.2cm}

\noindent
\textbf{Modelling of Wireless Grid and Database.}
A wireless grid is modeled as a set of $\mathcal{V}$ nodes (or wireless sensors) and each node is denoted as $v_{i} \in \mathcal{V}$. 
In most data grids, all participating nodes are interconnected with undirected communication links. 
Furthermore, various changes over the sensing range of each node impose a dynamic nature to the network topology. 
This means that the network topology forms a dynamic undirected graph. 

For modelling databases, we borrow notation from \cite{2021:Rikos_TaskScheduling}. 
Specifically, the database to be allocated in the network is $D_{dat}$. 
The data sets which consists the database are $d_j \in D_{dat}$ (where $j \in \{1,\ldots, |D_{dat}| \}$). 
The required memory for each data set $d_j$ to be stored (which is known before the optimization operation) is $\mu_j$.
Thus, the total required memory for database $D_{dat}$ is $\mu \coloneqq \sum_{v_{i}\in \mathcal{V}} \mu_j$. 
The total load of data at each node $v_j$, due to incoming data in the network is $l_j$. 
The time period for which the optimization operation is executed (before the next optimization operation) is $T_{o}$. 
The total memory of node $v_j$ is $\nu_j^{\max}$. 
The total memory in the network is $\nu^{\max} \coloneqq \sum_{v_{j}\in \mathcal{V}} \nu_j^{\max}$. 
The amount of unavailable memory of node $v_j$ due to previously stored data is $\delta_j[m]$. 
The total amount of unavailable memory in the network $\delta_{\mathrm{tot}}[m] = \sum_{v_{j}\in \mathcal{V}} \delta_{j}[m]$. 
The amount of available memory of node $v_j$ at optimization step $m$ (i.e., at time step $m T_{o}$) is $\nu_j^{\mathrm{avail}}[m] \coloneqq \nu_j^{\max} - \delta_{j}[m]$. 
The total amount of available memory in the network is $\nu^{\mathrm{avail}}[m] \coloneqq \sum_{v_{j}\in \mathcal{V}} \nu_j^{\mathrm{avail}}[m]$. 

\section{Problem Formulation}\label{sec:probForm}

Let us consider a wireless computer network modeled as a dynamic graph $\mathcal{G}[k] = (\mathcal{V}, \mathcal{E}[k])$ with $n  = | \mathcal{V} |$ nodes. 
Each node $v_i$ has a scalar quadratic local cost function $f_i : \mathbb{R}^n \mapsto \mathbb{R}$ (see \cite{2019:Yang_etal} and references therein) defined as: 
\begin{equation}\label{local_cost_functions}
    f_i(z) = \dfrac{1}{2} \alpha_i (z - \mu_i)^2 , 
\end{equation}
where $\alpha_i > 0$, $\mu_i \in \mathbb{R}$ is the demand of node $v_i$ and $z$ is a global optimization parameter which determines the data to be stored at each node. 
The global cost function is the sum of every local cost function $f_i$ (see \eqref{local_cost_functions}) in the network, i.e., 
\begin{equation}\label{global_cost_function}
    F(z) = \sum_{v_i \in \mathcal{V}} f_i(z) . 
\end{equation}
The main goal of the nodes is to distributively allocate the available data in order to calculate $z^*$ which minimizes the global cost function in \eqref{global_cost_function} and is defined as 
\begin{align}\label{opt:1}
z^* =  \arg\min_{z\in \mathcal{Z}} \sum_{v_{i} \in \mathcal{V}} f_i(z) , 
\end{align}
where $\mathcal{Z}$ is the set of feasible values of parameter $z$. 
Note that the solution $z^*$ in \eqref{opt:1} can be given in closed form as: 
\begin{align}\label{eq:closedform}
z^* =  \frac{\sum_{v_{i} \in \mathcal{V}} \alpha_i \mu_{i}}{\sum_{v_{i} \in \mathcal{V}} \alpha_i}.
\end{align}
Also, note here that if $\alpha_i =1$ for all $v_{i}\in\mathcal{V}$, the solution is the average.

The problem we present in this paper is borrowed from \cite{2020:Themis_Kalyvianaki, 2021:Rikos_TaskScheduling}, but is adjusted in the context of data allocation over dynamic wireless computer networks. 
Specifically, each node $v_j$ aims to calculate the optimal amount of data to receive $w_i^*[m]$ at each optimization step $m$, which fulfills 
\begin{align}\label{cond:balance}
\frac{w_i^*[m] +\delta_{i}[m]}{\nu_i^{\max}} &= \frac{w_j^*[m] +\delta_{j}[m]}{\nu_j^{\max}} \\
&= \frac{\mu[m] + \delta_{\mathrm{tot}}[m]}{\nu^{\max}}, \ \forall v_{i}, v_{j} \in \mathcal{V}. \nonumber
\end{align}
This means that every node aims to balance its data storage (i.e., the same percentage of stored data per available memory) during the algorithm's execution. 
In the remainder of this paper, we consider a single optimization step (i.e., without loss of generality, we drop index $m$). 
From \cite{2020:Themis_Kalyvianaki}, in order to fulfill \eqref{cond:balance}, we need: 
\begin{align}\label{eq:closedform1}
z^* =  \frac{\sum_{v_{i} \in \mathcal{V}} \nu_i^{\max} \frac{\mu_{i}+\delta_{i}}{\nu_i^{\max}}}{\sum_{v_{i} \in \mathcal{V}} \nu_i^{\max}} = \frac{\mu + \delta_{\mathrm{tot}}}{\nu^{\max}}.
\end{align}
Thus, the cost function $f_i(z)$ in \eqref{local_cost_functions} is given by
\begin{align}\label{eq:fiz}
f_i(z) = \frac{1}{2}\nu_i^{\max} \left(z- \frac{\mu_{i}+\delta_{i}}{\nu_i^{\max}} \right)^2.
\end{align}
This means that each node computes the optimal amount of data to store and then it is able to find the amount of data $w_i^*$ to receive, i.e.,
 \begin{align}\label{eq:optimal_workload}
w_i^*  = \frac{\mu + \delta_{\mathrm{tot}}}{\nu^{\max}} \nu_i^{\max} - \delta_{i}.
\end{align}

In our paper we aim to develop a distributed algorithm which operates over dynamic networks and during its operation, each node $v_j$ does the following:  
\begin{itemize}
    \item It calculates the optimal solution $w_j^*$ in \eqref{cond:balance} at every optimization step $m$. 
    \item It converges to the optimal solution after a finite number of time steps. 
    \item It processes and transmits quantized values. 
    \item It ceases transmissions once convergence has been achieved without having knowledge of global parameters (for preserving the available resources of each node). 
\end{itemize}

%
%
%
%
\section{Quantized Data Allocation Algorithm with Finite Transmission Capabilities}\label{sec:Dyn_Alloc_Alg_Finite_Tr}

In this section we present a distributed algorithm which solves the problem described in Section~\ref{sec:probForm}. 
The distributed algorithm is detailed below as Algorithm~\ref{algorithm_max} and allows each node in the network to calculate in finite time the optimal amount of data to receive. 
In order to solve the finite time data allocation problem, we make the following assumptions. 

\begin{assum}\label{str_digr}
Let us consider an infinite sequence of undirected graphs $\mathcal{G}[0], \mathcal{G}[1]$, $\mathcal{G}[2]$, ..., $\mathcal{G}[k]$, ..., describing a dynamic graph. 
There is a finite window length $l \in \mathbb{N}$ and an infinite sequence of time instants $t_0$, $t_1$, ..., $t_m$, ..., where $t_0 = 0$, such that for any $m \in \mathbb{Z}_+$, we have $0 < t_{m+1} - t_m < l < \infty$ and the union graph $\mathcal{G}^{t_{m}, ..., t_{m+1}-1}$, is equal to the nominal undirected graph $\mathcal{G}$ which is assumed to be connected. 
Furthermore, the diameter of the connected union graph $\mathcal{G}^{t_{m}, ..., t_{m+1}-1}$ is denoted as $D^{un}$ and is the longest shortest path between any two nodes $v_j, v_i \in \mathcal{V}$ (note that $D^{un}$ is also the diameter of the nominal graph $\mathcal{G}$). 
\end{assum}

\begin{assum}\label{ID_graph}
Each node $v_j$ has a unique ID. 
This ID is used to distinguish node $v_j$ from other nodes in the network. 
\end{assum}

\begin{assum}\label{assum_resource}
    %
    The time horizon $T_{o}$ at step $m$, is chosen such that $\mu[m] \leq \nu^{\mathrm{avail}}[m]$. 
    This means that the total amount of data to be allocated at a specific optimization step $m$ is smaller or equal to the total available memory of the network. 
\end{assum}

Assumption~\ref{str_digr} is a necessary network condition so that each node $v_j$ is be able to calculate the quantized average of each node's quantized state after a finite number of time steps. 
Assumption~\ref{ID_graph} is a necessary condition in order for each node to cease transmissions once the optimal allocation is calculated after a finite number of time steps. 
Assumption~\ref{assum_resource} is a necessary condition so that the total demand of data storage does not exceed the total available memory in the network. 
Note that $T_{o}$ can be chosen appropriately to fulfill this requirement. 
Furthermore, note that in case Assumption~\ref{assum_resource} does not hold, some data will not be allocated due to the lack of available storing memory in the system. 

We now describe the main operations of Algorithm~\ref{algorithm_max}. 
The initialization involves the following steps: 

\textbf{Initialization:} 
Each node $v_j \in \mathcal{V}$ does the following: (i) it initiates its mass variables, (ii) it initiates its transmission variables, and (iii) it sets its state variables to be equal to the mass variables. 
Then, it broadcasts the values of its state variables to every neighbor. 
Finally, it initializes set $S_j$, to contain the out-neighbors to which it transmitted its state variables at time step $k=0$. 

The iteration involves the following steps: 

\textbf{Iteration - Step~$1$. Probability Assignment and Receiving:} 
Each node $v_j \in \mathcal{V}$ assigns a nonzero probability to each of its edges during every time step $k$. 
The sum of the assigned nonzero probabilities is equal to one, during every time step $k$. 
Then, it receives from every neighbor (i) the transmitted set of state variables, and (ii) the transmitted set of mass variables (if no set of mass variables is received from a specific neighbor, $v_j$ assumes it received mass variables equal to zero from this neighbor). 

\textbf{Iteration - Step~$2$. Transmission Conditions According to Mass and State Variables:} 
If node $v_j$ received at least a set of mass variables or state variables during Iteration - Step~$1$, it checks the following conditions: 
\begin{list4}
    \item If the received set of state variables is ``greater'' (in the way clarified later in this section) than the current set of state variables, it sets its state variables to be equal to the received (greater) set of state variables and decides to broadcast its updated state variables. 
    \item If the stored set of mass variables is greater than the state variables it sets its state variables equal to the mass variables and decides to broadcast its updated state variables. 
    \item If the set of state variables is greater than the set of mass variables and the fraction of its mass variables is not equal to the fraction of its state variables, then it decides to directly transmit its mass variables to a randomly chosen neighbor. 
\end{list4} 
Then, if it decided to broadcast its state variables, it updates the stored set $S_j$ to be equal to the current set of out-neighbors (in order to remember which neighbors received the current state variables). 

\textbf{Iteration - Step~$3$. Transmission Conditions According to Dynamic Network:} 
At each time step $k$, each node $v_j$ checks whether the current set of neighbors is not included in the stored set $S_j$. 
Note here that the stored set $S_j$ denotes the neighbors which have received (or will receive) the current set of state variables. 
In case, there is one (or multiple) neighbor(s) who is (are) not included in the set $S_j$, node $v_j$ decides to broadcast its state variables so that this one neighbor (or multiple neighbors) receives the updated set of state variables. 

\textbf{Iteration - Step~$4$. Transmitting:} 
Each node $v_j$ checks its transmission variables. 
It transmits its mass variables via a direct transmission, or broadcasts its state variables. 
Then, it sets its tramsmission variables equal to zero and repeats the operation.

The details of the dynamic algorithm with transmission stopping capabilities can be seen in Algorithm~\ref{algorithm_max}. 

\noindent
\vspace{-0.5cm}    
\begin{varalgorithm}{1}
\caption{Optimal Data Allocation Algorithm with Efficient Communication and Transmission Stopping}
\textbf{Input:} A set of graphs $\mathcal{G}[k] = (\mathcal{V}, \mathcal{E}[k])$ with $n=|\mathcal{V}|$ nodes and $m[k]=|\mathcal{E}[k]|$ edges for which Assumption~\ref{str_digr} holds. 
\\
\textbf{Initialization:} Each node $v_j \in \mathcal{V}$ does the following: 
\begin{list4}
\item[1)] Sets $z_j[0] := l_j + \delta_j$, $y_j[0] = \nu_j^{\max}$, $z^s_j[0] = 1$, $y^s_j[0] = y_j[0]$, $q^s_j[0] = y^s_j[0] / z^s_j[0]$ and $S\_br_j = 0$, $M\_tr_j = 0$. 
\item[2)] Broadcasts $z^s_j[0]$, $y^s_j[0]$ to every $v_l \in \mathcal{N}_j$.
\item[3)] Sets $S = \mathcal{N}_j[0]$. 
\end{list4}
\textbf{Iteration:} For $k=0,1,2,\dots$, each node $v_j \in \mathcal{V}$ does the following: 
\begin{list4}
\item[1)] Assigns a nonzero probability $b_{lj}[k]$ to each of its edges $m_{lj}$, where $v_l \in \mathcal{N}_j[k]$, as follows 
\begin{align*}
b_{lj}[k] = \left\{ \begin{array}{ll}
         \frac{1}{\mathcal{D}_j[k]+1}, & \mbox{if $l = j$ or $v_{l} \in \mathcal{N}_j[k]$,}\\
         0, & \mbox{if $l \neq j$ and $v_{l} \notin \mathcal{N}_j[k]$.}\end{array} \right. 
\end{align*}
\item[2)] Receives $y^s_i[k]$, $z^s_i[k]$ from every $v_i \in \mathcal{N}_j[k]$ (if no message is received, it sets $y^s_i[k] = 0$, $z^s_i[k] = 0$). 
\item[3)] Receives $y_i[k]$, $z_i[k]$ from each $v_i \in \mathcal{N}_j[k]$ and sets 
$$
y_j[k+1] = y_j[k] + \sum_{v_i \in \mathcal{N}_j[k]} w_{ji}[k]y_i[k] ,
$$  \vspace{-.3cm}
$$
z_j[k+1] = z_j[k] + \sum_{v_i \in \mathcal{N}_j[k]} w_{ji}[k]z_i[k] ,
$$
where $w_{ji}[k]=1$ if a message with $y_i[k]$, $z_i[k]$ is received from in-neighbor $v_i$, otherwise $w_{ji}[k]=0$. 
\item[4)] \textbf{If} $w_{ji}[k] \neq 0$ or $z^s_i[k] \neq 0$ for some $v_i \in \mathcal{N}_j^-[k]$ \textbf{then}
\begin{list4}
\item[4a)] Calls Algorithm~\ref{algorithm_max_1a}. 
\item[4b)] \underline{Event Trigger Conditions~$1$:} \textbf{If} 
$
S_j \ \cap \ \mathcal{N}_j^+[k]  \neq  \emptyset ,
$
\\ \textbf{then} node $v_j$ sets $S\_br_j = 1$, and $S_j = S_j \cup \mathcal{N}_j[k]$. 
\item[4c)] \textbf{If} $M\_tr_j = 1$ \textbf{then} node $v_j$ chooses $v_l \in \mathcal{N}_j[k]$ randomly according to $b_{lj}[k]$ and transmits $y_j[k]$, $z_j[k]$. 
Then, node $v_j$ sets $y_j[k] = 0$, $z_j[k] = 0$, $M\_tr_j = 0$. 
\item[4d)] \textbf{If} $S\_br_j = 1$ \textbf{then}, node $v_j$ broadcasts $z^s_j[k+1]$, $y^s_j[k+1]$ to every $v_l \in \mathcal{N}_j$. 
Then, it sets $S\_br_j = 0$. 
\end{list4}
\item[5)] Repeats (increases $k$ to $k + 1$ and goes back to Step~$1$).
\end{list4}
\textbf{Output:} Sets $w^*_j + \delta_{j} = (\nu_j^{\max} / q^s_j[k])$ and \eqref{cond:balance} holds for every $v_j \in \mathcal{V}$. 
\label{algorithm_max}
\end{varalgorithm}

\noindent
\vspace{-0.5cm}    
\begin{varalgorithm}{1.A}
\caption{Event-Triggered Conditions for Algorithm~\ref{algorithm_max} (for each node $v_j$)}
\textbf{Input} 
\\ $y^s_j[k]$, $z^s_j[k]$, $q^s_j[k]$, $y_j[k+1]$, $z_j[k+1]$, $S\_br_j$, $M\_tr_j$, $S_j$, $\mathcal{N}_j[k]$, and the received $y^s_i[k]$, $z^s_i[k]$ from every $v_i \in \mathcal{N}_j[k]$.
\\
\textbf{Execution} 
\begin{list4}
\item \underline{Event Trigger Conditions~$1$:} \textbf{If} 
\\ Condition~$(i)$: $z^s_i[k] > z^s_j[k]$, or
\\ Condition~$(ii)$: $z^s_i[k] = z^s_j[k]$ and $y^s_i[k] > y^s_j[k]$, 
\\ \textbf{then} node $v_j$ sets 
$$ 
z^s_j[k+1] = \max_{v_i \in \mathcal{N}_j[k]} z^s_i[k] , \ \ \text{and}
$$ 
$$ 
y^s_j[k+1] = \max_{v_i \in \{v_{i'} \in \mathcal{N}_j[k] | z^s_{i'}[k] = z^s_j[k+1]\}} y^s_i[k] ,
$$ 
and also sets $q^s_j[k+1] = \frac{y^s_j[k+1]}{z^s_j[k+1]}$, and $S\_br_j = 1$. 
\item \underline{Event Trigger Conditions~$2$:} \textbf{If}
\\ Condition~$(i)$: $z_j[k+1] > z^s_j[k+1]$, or 
\\ Condition~$(ii)$: $z_j[k+1] = z^s_j[k+1]$ and $y_j[k+1] > y^s_j[k+1]$, 
\\ \textbf{then} node $v_j$ sets $z^s_j[k+1] = z_j[k+1]$, $y^s_j[k+1] = y_j[k+1]$, and $q^s_j[k+1] = \frac{y^s_j[k+1]}{z^s_j[k+1]}$ and $S\_br_j = 1$. 
\item \underline{Event Trigger Conditions~$3$:} \textbf{If}
\\ Condition~$(i)$: $0 < z_j[k+1] < z^s_j[k+1]$ or 
\\ Condition~$(ii)$: $z_j[k+1] = z^s_j[k+1]$ and $y_j[k+1] < y^s_j[k+1]$, 
\\ \textbf{then} node $v_j$ sets $M\_tr_j = 1$.
\item \underline{Event Trigger Conditions~$4$:} \textbf{If} $0 < z_j[k+1]$ and 
$$
\dfrac{y_j[k+1]}{z_j[k+1]} = \dfrac{y^s_j[k+1]}{z^s_j[k+1]} , 
$$
\textbf{then} node $v_j$ sets $M\_tr_j = 0$.
\item \underline{Event Trigger Conditions~$5$:} \textbf{If} $S\_br_j = 1$
\\ \textbf{then} node $v_j$ sets $S = \mathcal{N}_j[k]$. 
\end{list4} 
\textbf{Output} 
\\ $y^s_j[k]$, $z^s_j[k]$, $q^s_j[k]$, $S\_br_j$, $M\_tr_j$, $S_j$. 
\label{algorithm_max_1a}
\end{varalgorithm}


\vspace{.1cm}

\begin{remark}
In Definition~\ref{ID_graph}, each node $v_j$ has a unique ID in order to cease transmissions once the optimal allocation is calculated after a finite number of time steps. 
To the authors knowledge this is the first algorithm which operates over dynamic networks in which nodes are able to cease transmissions without any global parameter such as the network diameter. 
However, note that if nodes have knowledge of the parameter $l$ (see Definition~\ref{str_digr}), then they do not need to have unique IDs. 
Specifically, every time their state variables are updated (see Algorithm~\ref{algorithm_max_1a}) they can broadcast their state variables for $l$ time steps. 
From Definition~\ref{str_digr}, every $l$ time steps, there is a link from node $v_j$ to every neighboring node. 
As a result, if nodes broadcast their state variables for $l$ time steps, every neighbor will receive the updated states at least once and the algorithm will converge to the optimal solution in finite time. 
\end{remark}

\subsection{Operation over Dynamic Graphs}\label{dynamic_oper_analysis}

We now analyze the functionality of Algorithm~\ref{algorithm_max} over dynamic networks. 
We consider the following two definitions which are important for our subsequent development. 

\begin{defn}\label{leadingmass_defin}
Consider a set of graphs $\mathcal{G}[k] = (\mathcal{V}, \mathcal{E}[k])$, $k=0, 1, 2, ...$, with $n=|\mathcal{V}|$ nodes and $m[k]=|\mathcal{E}[k]|$ edges for which Assumption~\ref{str_digr} holds. 
During the execution of Algorithm~\ref{algorithm_max}, at time step $k_0$, there is at least one node $v_{j'} \in \mathcal{V}$, for which 
\begin{equation}\label{great_z_prop1_det}
z_{j'}[k_0] \geq z_i[k_0], \ \forall v_i \in \mathcal{V}.
\end{equation}
Then, among the nodes $v_{j'}$ for which (\ref{great_z_prop1_det}) holds, there is at least one node $v_j$ for which 
\begin{equation}\label{great_z_prop2_det}
y_j[k_0] \geq y_{l}[k_0] , \ \text{where} \ \ v_j, v_{l} \in \{ v_{j'} \in \mathcal{V} \ | \ (\ref{great_z_prop1_det}) \ \text{holds} \}.
\end{equation}
For notational convenience we will call the pair of mass variables of node $v_j$ for which (\ref{great_z_prop1_det}) and (\ref{great_z_prop2_det}) hold as the ``leading mass'' (or ``leading masses'' if multiple nodes hold such a pair of values) and the pairs of mass variables of a node $v_l$ for which $z_l[k_0]>0$ but (\ref{great_z_prop1_det}) and (\ref{great_z_prop2_det}) do not hold as the ``follower mass'' (or ``follower masses'').
\end{defn}

\begin{defn}\label{mass_merge}
Consider a set of graphs $\mathcal{G}[k] = (\mathcal{V}, \mathcal{E}[k])$, $k=0, 1, 2, ...$, with $n=|\mathcal{V}|$ nodes and $m[k]=|\mathcal{E}[k]|$ edges for which Assumption~\ref{str_digr} holds. 
During the execution of Algorithm~\ref{algorithm_max}, at time step $k_0$, if two (or more) masses (for which $z \neq 0$) reach a node simultaneously then we say that they ``merge''. 
This means that the receiving node ``merges'' the mass variables it receives by summing their numerators and their denominators (according to Step~$3$ of the Iteration of Algorithm~\ref{algorithm_max}). 
This way a set of mass variables with a greater denominator is created.
\end{defn}

The intuition behind Algorithm~\ref{algorithm_max} can be described through the following three stages. 
\\ \underline{Stage~$1$:} Initially every node assumes that its mass variables are the leading mass and broadcasts its state variables. 
After a finite number of time steps, the state variables of every node in the network are equal to the leading mass (let us assume the simple scenario where until the state variables of every node in the network are equal to the leading mass no mass variables merged, thus the leading mass do not change). 
\\ \underline{Stage~$2$:} Once the state variables of every node become equal to the leading mass, every node transmits its mass variables towards a randomly chosen neighbor. 
This means that the mass variables of every node (except the node whose mass variables are the leading mass) perform a random walk. 
During their random walk, the mass variables either merge with the leading mass (which is not transmitted), or merge between them (if they visit a common node). 
In the first case, the leading mass is updated and the corresponding node broadcasts its updated state variables. 
In the second case, if two mass variables visit a common node, the node checks if the merged mass variables are now the leading mass (note that the node's state variables are equal to the leading mass from Stage~$1$). 
If the merged mass variables are the leading mass, then the corresponding node broadcasts its state variables and stores the mass variables (i.e., it does not transmit its mass variables). 
However, if the merged mass is not the leading mass, then it is transmitted to a randomly chosen neighbor. 
\\ \underline{Stage~$3$:} Once the leading mass is updated (i.e., the mass variables either merge with the leading mass, or merge between them), the corresponding node broadcasts its state variables. 
Thus, after a finite number of time steps, the state variables of every node in the network become equal to the updated leading mass, and then Stage~$2$ is repeated.

Note here that during the operation of Algorithm~\ref{algorithm_max}, there is always a set of mass variables which is the leading mass. 
The follower masses perform random walks until they merge with the leading mass or they merge between them and become the leading mass. 
Once the leading mass is updated (i.e., either a follower mass merges with the leading mass or two follower masses merge between them and become the leading mass), every node in the network receives the updated state variables of the node whose mass variables are the leading mass (see Stage~$1$). 
As a result, after a finite number of time steps (i) the leading mass becomes equal to the quantized average of the initial states, (ii) each node sets its state variables equal to the leading mass (i.e., the average of the initial states), and (iii) once each node's state becomes equal to the average of the initial states, transmissions are ceased (since there are no more updates of the leading mass). 

\vspace{.3cm}


\textbf{Comparison with Previous Works.}
It is important to note that Algorithm~\ref{algorithm_max} is significantly different from the optimal allocation algorithm in \cite{2021:Rikos_TaskScheduling} and the quantized average consensus algorithm in \cite{2021:Rikos_Hadj_Fin_Tr_Av_Con}. 
In \cite{2021:Rikos_TaskScheduling} the authors present a distributed algorithm for optimally allocating resources over a directed communication network. 
The algorithm converges in finite time and exhibits a fast convergence rate over large scale networks. 
However, (i) it operates over static directed networks, (ii) its transmission stopping mechanism requires knowledge of the diameter of the network (which is a global parameter), and (iii) it introduces a quantization error (which is upper bounded by the size of the quantization step) to the final calculated value, due to its operation, which relies on communication and processing of quantized values. 
In \cite{2021:Rikos_Hadj_Fin_Tr_Av_Con} the authors propose a distributed algorithm for calculating the average of the initial states over a directed communication network. 
This algorithm achieves deterministic convergence and requires a finite number of time steps upper bounded by a polynomial function. 
However, its deterministic finite time convergence requires (i) the existence of a static directed communication network (i.e., it is not adjusted to operate over time-varying networks), and (iii) a large number of time steps, due to its round-robin transmission strategy. 
Algorithm~\ref{algorithm_max} operates over dynamic undirected networks. 
Furthermore, it incorporates a transmission stopping mechanism, which (i) does not require knowledge of any global parameter of the network, and (ii) is adjusted to the dynamic nature of the network (note here that it is the fist transmission stopping mechanism which operates over time-varying networks). 
Additionally, Algorithm~\ref{algorithm_max} convergences almost surely (with probability arbitrarily close to one) to the optimal solution due to the randomized nature of the directed transmissions. 
Finally, Algorithm~\ref{algorithm_max} converges to the optimal solution in finite time without introducing a quantization error (i.e., the exact optimal allocation value is calculated in the form of a quantized fraction). 

\subsection{Convergence of Algorithm~\ref{algorithm_max}}\label{ConvDynamicAlg}

We now analyze the convergence time of Algorithm~\ref{algorithm_max}. 
We first consider Lemma~\ref{onetokenwalk}, \textit{mutatis mutandis}, which is necessary for our subsequent development. 

\begin{lemma}[\hspace{-0.00001cm}\cite{2020:RikosHadj_IFAC}]
\label{onetokenwalk}
Consider a sequence of graphs $\mathcal{G}[k] = (\mathcal{V}, \mathcal{E}[k])$, $k=0, 1, 2, ...$, with $n=|\mathcal{V}|$ nodes, $m[k]=|\mathcal{E}[k]|$ edges, so that Assumption~\ref{str_digr} holds for $\mathcal{G}[k]$ over all $k$. 
At each time step $k$, suppose that each node $v_j$ assigns a nonzero probability $b_{lj}[k]$ to each of its edges $m_{lj}[k]$, where $v_l \in \mathcal{N}_j[k] \cup \{v_j\}$, as follows  
\begin{align*}
b_{lj} = \left\{ \begin{array}{ll}
         \frac{1}{1 + \mathcal{D}_j[k]}, & \mbox{if $l = j$ or $v_{l} \in \mathcal{N}_j[k]$,}\\
         0, & \mbox{if $l \neq j$ and $v_{l} \notin \mathcal{N}_j[k]$.}\end{array} \right. 
\end{align*}
At time step $k=0$, node $v_j$ holds a ``token" while the other nodes $v_l \in \mathcal{V} - \{ v_j \}$ do not. 
At each time step $k$, each node $v_j$ transmits the ``token'' (if it has the token, otherwise it performs no transmission) according to the nonzero probability $b_{lj}[k]$ it assigned to its edges $m_{lj}[k]$. 
The probability $P^{D^{un}}_{DT_i}$ that the token is at node $v_i$ after $l D^{un}$ time steps satisfies 
$$
P^{l D^{un}}_{DT_i} \geq (1+\mathcal{D}_{max})^{-(l D^{un})} > 0 , 
$$
where $l$ is the time window defined in Assumption~\ref{str_digr} (for which the union graph $\mathcal{G}^{t_{m}, ..., t_{m+1}-1}_d$ is equal to the nominal graph $\mathcal{G}$ which is connected), and $\mathcal{D}_{max}$ is the maximum degree of every node in the nominal graph $\mathcal{G}$. 
\end{lemma}

We now consider Lemma~\ref{tokenvisit}, which analyzes the probability according to which a token performing a random walk visits a specific node. 

\begin{lemma}
\label{tokenvisit}
Consider a sequence of graphs $\mathcal{G}[k] = (\mathcal{V}, \mathcal{E}[k])$, $k=0, 1, 2, ...$, with $n=|\mathcal{V}|$ nodes, $m[k]=|\mathcal{E}[k]|$ edges, so that Assumption~\ref{str_digr} holds for $\mathcal{G}[k]$ over all $k$. 
At each time step $k$, suppose that each node $v_j$ assigns a nonzero probability $b_{lj}[k]$ to each of its edges $m_{lj}[k]$, where $v_l \in \mathcal{N}_j[k] \cup \{v_j\}$, as follows  
\begin{align*}
b_{lj} = \left\{ \begin{array}{ll}
         \frac{1}{1 + \mathcal{D}_j[k]}, & \mbox{if $l = j$ or $v_{l} \in \mathcal{N}_j[k]$,}\\
         0, & \mbox{if $l \neq j$ and $v_{l} \notin \mathcal{N}_j[k]$.}\end{array} \right. 
\end{align*}
At time step $k=0$, node $v_j$ holds a ``token" while the other nodes $v_l \in \mathcal{V} - \{ v_j \}$ do not. 
At each time step $k$, each node $v_j$ transmits the ``token'' (if it has the token, otherwise it performs no transmission) according to the nonzero probability $b_{lj}[k]$ it assigned to its edges $m_{lj}[k]$. 
For any probability $p_0$, where $0 < p_0 < 1$, there exists $k_0 \in \mathbb{Z}_+$, so that with probability at least $p_0$, the token has visited a specific node $v_i$, (where $l$ is the time window defined in Assumption~\ref{str_digr} for which the union graph $\mathcal{G}^{t_{m}, ..., t_{m+1}-1}$ is equal to the nominal graph $\mathcal{G}$ which is connected). 
\end{lemma}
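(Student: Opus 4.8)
The plan is to leverage Lemma~\ref{onetokenwalk} as a per-window hitting estimate and then chain these estimates across consecutive windows using the Markov property of the token walk, so that the probability of never reaching $v_i$ decays geometrically in the number of windows. The key preliminary observation is that the bound furnished by Lemma~\ref{onetokenwalk}, namely $p := (1+\mathcal{D}_{max})^{-(l D^{un})} > 0$, is \emph{uniform in the starting node}: the constant depends only on $\mathcal{D}_{max}$, $l$, and $D^{un}$, and the argument producing it uses only the window-connectivity guaranteed by Assumption~\ref{str_digr}, which holds for every block of $l D^{un}$ consecutive steps of the infinite sequence. Consequently, for any starting time $k'$ and any node $v$, conditioned on the token being at $v$ at time $k'$, the probability that the token is at $v_i$ at time $k' + l D^{un}$ is at least $p$.

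Next I would partition the time axis into consecutive windows $W_r = \{(r-1)\, l D^{un}, \dots, r\, l D^{un}\}$ for $r = 1, 2, \dots$, and let $Y_r$ denote the position of the token at the window endpoint time $r\, l D^{un}$ (with $Y_0 = v_j$). Since the token's transmission at each step depends (conditionally on the revealed graph sequence) only on its current location, the sequence $(Y_r)_{r \geq 0}$ is a time-inhomogeneous Markov chain. Applying the uniform estimate of the previous paragraph with start time $(r-1)\, l D^{un}$ and start node $Y_{r-1}$ gives, for every realization of the past,
\[
\mathbb{P}\big(Y_r = v_i \,\big|\, Y_0, Y_1, \dots, Y_{r-1}\big) \geq p ,
\qquad\text{hence}\qquad
\mathbb{P}\big(Y_r \neq v_i \,\big|\, Y_0, \dots, Y_{r-1}\big) \leq 1 - p .
\]
Because being at $v_i$ at any endpoint time implies the token has visited $v_i$, the event that the token has not visited $v_i$ by time $r\, l D^{un}$ is contained in $\{Y_1 \neq v_i, \dots, Y_r \neq v_i\}$. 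Multiplying the conditional bounds over $r$ successive windows then yields
\[
\mathbb{P}\big(\text{token has not visited } v_i \text{ by time } r\, l D^{un}\big) \leq (1-p)^r .
\]

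Finally, given any target probability $p_0 \in (0,1)$, I would choose $r$ so that $(1-p)^r \leq 1 - p_0$, i.e.\ $r = \left\lceil \ln(1-p_0) / \ln(1-p) \right\rceil$, and set $k_0 = r\, l D^{un} \in \mathbb{Z}_+$. Then the probability that the token has visited $v_i$ within $k_0$ steps is at least $1 - (1-p)^r \geq p_0$, which establishes the claim. The main obstacle to watch is the justification of the per-window bound being simultaneously uniform over the starting node \emph{and} independent of the history: this is exactly where the Markov property of the walk and the position-independence of the constant in Lemma~\ref{onetokenwalk} must be invoked carefully, since otherwise the conditional factors could not be multiplied to produce the geometric decay. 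A secondary point requiring care is aligning the fixed-length windows $W_r$ with the variable-length connectivity blocks $[t_m, t_{m+1}-1]$ of Assumption~\ref{str_digr}; this is handled by noting that each window of $l D^{un}$ steps contains at least $D^{un}$ full connectivity blocks, which is precisely what Lemma~\ref{onetokenwalk} already uses.
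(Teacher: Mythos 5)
Your proof is correct and follows essentially the same route as the paper's: a uniform per-window hitting bound $p = (1+\mathcal{D}_{max})^{-(l D^{un})}$ drawn from Lemma~\ref{onetokenwalk}, geometric decay of the non-visit probability across consecutive windows of length $l D^{un}$, and a choice of the number of windows (the paper's $\tau$ in \eqref{windows_for_conv_1_no_oscil}, with $\varepsilon \leq 1 - p_0$) making that probability at most $1 - p_0$. The only difference is presentational: you justify the chaining of per-window bounds explicitly via the Markov property and the starting-node-independence of $p$, a step the paper performs implicitly when writing $P^{\tau}_{N\_ DT^{out}} \leq [P^{l D^{un}}_{N\_ T^{out}}]^{\tau}$.
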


\begin{proof}
See Appendix~\ref{appendix:A}.
\end{proof}

We are now ready to present Theorem~\ref{converge_Alg1} which analyzes the finite time convergence of Algorithm~\ref{algorithm_max}. 

\begin{theorem}\label{converge_Alg1}
Consider a sequence of graphs $\mathcal{G}[k] = (\mathcal{V}, \mathcal{E}[k])$, $k=0, 1, 2, ...$, with $n=|\mathcal{V}|$ nodes, $m[k]=|\mathcal{E}[k]|$ edges, so that Assumption~\ref{str_digr}, Assumption~\ref{ID_graph}, and Assumption~\ref{assum_resource} hold for $\mathcal{G}[k]$ over all $k$. 
Suppose that each node $v_j \in \mathcal{V}$ follows the Initialization and Iteration steps as described in Algorithm~\ref{algorithm_max}, where $l_j, \delta_j, \nu_j^{\max} \in \mathbb{N}$ for every node $v_j \in \mathcal{V}$ at time step $k=0$. 
During the operation of Algorithm~\ref{algorithm_max}, for any probability $p_0'$ (where $0 < p_0' < 1$) there exists $k_0' \in \mathbb{Z}_+$, so that with probability at least $p_0'$ each node $v_j$ is able to (i) calculate the optimal amount of data $w^*_j$ (shown in \eqref{eq:optimal_workload}) after a finite number of time steps $k_0$, and (ii) cease transmissions after calculating $w^*_j$. 
\end{theorem}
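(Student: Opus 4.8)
The plan is to read Algorithm~\ref{algorithm_max} as a quantized average-consensus scheme in which the pairs $(y_j,z_j)$ are conserved tokens that random-walk and coalesce, and then to prove part~(i) by a mass-conservation invariant together with a probabilistic ``all masses merge'' argument, and part~(ii) by a deterministic state-propagation argument. First I would establish the conservation invariant: summing the updates of Step~$3$ over all nodes and over masses in transit, the totals $\sum_{v_j\in\mathcal{V}} z_j[k]=\mu+\delta_{\mathrm{tot}}$ and $\sum_{v_j\in\mathcal{V}} y_j[k]=\nu^{\max}$ are preserved for every $k$, since a direct transmission only relocates a pair and a merge (Definition~\ref{mass_merge}) adds numerators and denominators. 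Hence if all mass ever collects at a single node $v_r$, that node holds $(y_r,z_r)=(\nu^{\max},\mu+\delta_{\mathrm{tot}})$, so $q^s_r=\nu^{\max}/(\mu+\delta_{\mathrm{tot}})$; by the output rule $w^*_j+\delta_j=\nu_j^{\max}/q^s_j$ this yields exactly \eqref{eq:optimal_workload}. The same conclusion holds in the benign case where several masses survive but all share a common fraction, because conservation then forces that fraction to equal $\nu^{\max}/(\mu+\delta_{\mathrm{tot}})$.

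The probabilistic core is to show that for any $p_0'\in(0,1)$ there is a finite horizon after which, with probability at least $p_0'$, all masses have merged into one. I would track the number $M[k]$ of masses with $z>0$; by Definition~\ref{mass_merge} it is non-increasing, starts at $M[0]\le n$, and drops by one at each merge, so at most $n-1$ merges are needed. The key sub-claim is that, conditioned on $M[k]\ge 2$ with at least two distinct fractions present, a merge occurs within a fixed window of $\kappa$ steps with probability bounded below by a constant $p^\ast>0$. To see this, note there is always a static leading mass (Definition~\ref{leadingmass_defin}) at some node $v_r$, which is not transmitted because its own mass equals its state and Event Trigger Condition~$4$ sets $M\_tr_j=0$; meanwhile a follower whose fraction differs from the leading one satisfies Event Trigger Condition~$3$ and performs the random walk of Lemma~\ref{tokenvisit}. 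Applying Lemma~\ref{tokenvisit} to that follower with target $v_r$ shows it reaches $v_r$ within $\kappa$ steps with probability at least $p^\ast$; and if the leading mass instead relocates during the window, that relocation was itself produced by a merge, so $M[k]$ decreases either way. Iterating over consecutive windows, the number of windows until $M=1$ is stochastically dominated by a sum of $n-1$ independent geometric variables of parameter $p^\ast$, which is almost surely finite; choosing the horizon $N\kappa$ large enough makes all merges complete with probability at least $p_0'$.

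Once the single mass $(\nu^{\max},\mu+\delta_{\mathrm{tot}})$ has formed, I would invoke Assumption~\ref{str_digr}: its host broadcasts the now-final state, and because every union graph $\mathcal{G}^{t_m,\dots,t_{m+1}-1}$ is connected, this maximal state propagates through Event Trigger Condition~$1$ to all nodes within at most $lD^{un}$ further steps, after which $q^s_j=\nu^{\max}/(\mu+\delta_{\mathrm{tot}})$ for every $j$; adding this deterministic delay to $N\kappa$ gives the $k_0'$ of part~(i). For transmission stopping (part~(ii)) I would verify that no trigger can fire afterwards: Event Trigger Conditions~$1$ and~$2$ require a strictly larger received or local mass, which no longer exists; the unique surviving mass has $M\_tr_j=0$ by Event Trigger Condition~$4$ while every other node carries zero mass; and the only remaining broadcast rule, Step~$4$b, fires solely for a neighbor absent from $S_j$. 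Since after the last state update every neighbor of the fixed nominal graph $\mathcal{G}$ is recorded in $S_j$ within one more window of Assumption~\ref{str_digr}, re-broadcasting ceases; Assumption~\ref{ID_graph} is precisely what lets each node maintain $S_j$ correctly and detect that all neighbors hold the final state.

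The step I expect to be the main obstacle is the per-window merge bound: establishing $p^\ast>0$ uniformly over all configurations with $M\ge 2$, while correctly handling both that the leading mass may change location mid-window and that the tagged follower may merge with another follower before reaching $v_r$. The way around both difficulties is to observe that \emph{any} coincidence of two masses at a node is a merge, so the event ``$M[k]$ decreases'' only becomes more likely when extra followers are present; this legitimizes applying Lemma~\ref{tokenvisit} to a single tagged follower while ignoring the others, and then chaining the windows through the stochastic-domination argument.
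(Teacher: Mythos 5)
Your proposal is correct and follows essentially the same route as the paper's own proof: the leading/follower-mass decomposition of Definition~\ref{leadingmass_defin}, Lemma~\ref{tokenvisit} as the per-window merge-probability engine, chaining of at most $n-1$ merges over successive windows, and the union-graph connectivity of Assumption~\ref{str_digr} for the final state propagation and the transmission-stopping check. You in fact supply several details the paper leaves implicit (the mass-conservation invariant behind the output formula, the stochastic-domination argument for chaining windows, the benign case where surviving masses share a common fraction, and the explicit verification that no event-trigger condition can fire after convergence), but the underlying argument is the same.
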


\begin{proof}
See Appendix~\ref{appendix:B}.
\end{proof}


%
%
%
%

\section{Simulation Results} \label{sec:results}

In this section, we present simulation results in order to demonstrate the operation of Algorithm~\ref{algorithm_max} and its potential advantages. 
We focus on a random graph of $20$ nodes and show how the nodes' states converge to the optimal solution. 
Furthermore, we show the total accumulated number of transmissions and the number transmissions at every time step. 
To the best of our knowledge, this is the first work that faces the problem of optimal resource allocation using quantized values with transmission stopping guarantees over dynamic networks. 

\textbf{Evaluation over a Dynamic Network of $20$ Nodes.} 
The dynamic network is comprised of $20$ nodes and the union of the dynamic networks is equal to the nominal graph after $l = 5$ time steps. 
The nominal graph is assumed to be connected and has a diameter equal to $3$. 
At each node $v_j$, the total load of data $l_j$ was generated via a random distribution uniformly picked within the range $[1, 50]$. 
The total load of data in the network is equal to $504$ (i.e., $\sum_{v_j \in \mathcal{V}} l_j = 504$). 
For a randomly chosen set of seven nodes the total memory was set to be $31752$, for a randomly chosen set of seven nodes the total memory was set to be $63504$, and for a randomly chosen set of six nodes the total memory was set to be $95256$. 
Our simulation results are shown in Fig.~\ref{optimality} and Fig.~\ref{transmissions}. 

In Fig.~\ref{optimality} (A), we can see that each node $v_j$ is able to calculate the exact ratio of memory per data after $29$ time steps. 
The ratio is equal to $504 / 20$ and and is calculated exactly in the form of a quantized fraction without any errors due to quantized communication and processing. 
In Fig.~\ref{optimality} (B), we can see that each node $v_j$ is able to calculate the optimal amount of data to receive after $29$ time steps. 
The amount of data is proportional to the node's memory capacity. 
Specifically, from the result of Fig.~\ref{optimality}, each node calculates the ratio of data per memory and then scales with its available memory capacity. 
Specifically, the $7$ nodes with total memory equal to $31752$, receive $1260$ amount of data. 
The $7$ nodes with total memory equal to $63504$, receive $2520$ amount of data (double the amount received by the nodes with $31752$). 
Finally, the $6$ nodes with total memory equal to $95256$, receive $3780$ amount of data. 

In Fig.~\ref{transmissions} (A), we can see the accumulated total number of transmissions performed from nodes in the network during the operation of Algorithm~\ref{algorithm_max}. 
The total number of transmissions performed is equal to $291$ during $29$ time steps. 
In Fig.~\ref{transmissions} (B), we can see the number of transmissions during the operation of Algorithm~\ref{algorithm_max} at every time step $k$. 
We can see, that in the beginning, the number of transmissions is high. 
However, after $10$ time steps, it is decreased and it increases at specific instances due to the dynamic nature of the communication network. 
The number of transmissions performed becomes equal to $0$ after $29$ time steps. 

\begin{figure}[t]
\begin{center}
\includegraphics[width=.9\columnwidth]{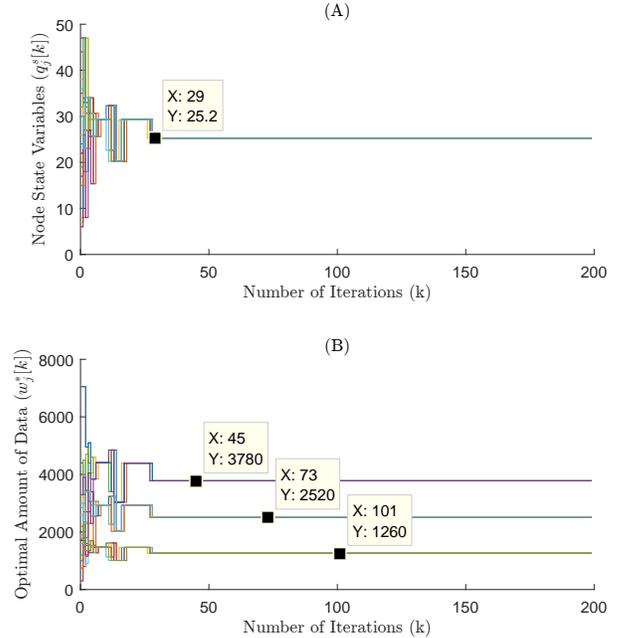}
\caption{Execution of Algorithm~\ref{algorithm_max} over a random dynamic graph comprised of $20$ nodes where the union of the dynamic graphs is equal to the nominal graph after $l = 5$ time steps. 
\textit{(A):} During Algorithm~\ref{algorithm_max} every node calculates the amount of data per memory after $29$ time steps. 
\textit{(B):} Algorithm~\ref{algorithm_max} converges to the optimal solution after $29$ time steps\vspace{-0.2cm}.}
\label{optimality}
\end{center}
\end{figure}

\begin{figure}[t]
\begin{center}
\includegraphics[width=.85\columnwidth]{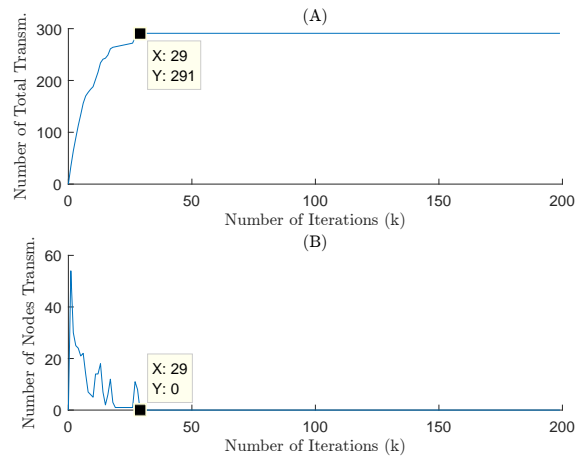}
\caption{Execution of Algorithm~\ref{algorithm_max} over a random dynamic graph comprised of $20$ nodes where the union of the dynamic graphs is equal to the nominal graph after $l = 5$ time steps. 
\textit{(A):} During Algorithm~\ref{algorithm_max} the total number of transmissions performed is equal to $291$. \textit{(B):} During Algorithm~\ref{algorithm_max} every node ceases transmissions after $29$ time steps\vspace{-0.2cm}.}
\label{transmissions}
\end{center}
\end{figure}

%
%
%
%

\section{Conclusions and Future Directions}\label{sec:conclusions}

In this paper, we focused on the problem of optimal data scheduling over a wireless computing network. 
We proposed a distributed algorithm which operates over dynamic networks and converges in finite time. 
We showed that our algorithm converges to the exact optimal solution in finite time. 
Our algorithm operates with quantized values (i.e., each node processes and transmits quantized values) and, once it converges to the optimal solution, each node ceases transmissions. 
Finally, we have demonstrated the operation of our algorithm over random dynamic networks and exhorted its finite time convergence. 

Our proposed algorithm operates over dynamic undirected networks, thus its operation should be extended over dynamic directed networks. 
Also, designing a privacy preserving protocol is important in order to protect the privacy of the node which initially stores the largest amount of data in the network. 


\appendices
%
%
%
%
\section{Proof of Lemma~\ref{tokenvisit}} 
\label{appendix:A} 


The token performs random walk over a dynamic graph $\mathcal{G}[k]$. 
Since $b_{lj}[k] \geq (1+\mathcal{D}_{max})^{-1}$ (where $\mathcal{D}_{max}$ is defined in Lemma~\ref{onetokenwalk}) and Assumption~\ref{str_digr} holds for $\mathcal{G}[k]$ during all $k$, we have that the probability $P^{l D^{un}}_{DT^{out}}$ that ``the token is at node $v_i$ after $l D^{un}$ time steps'' is 
\begin{equation}\label{lowerProf_no_oscil_dyn}
P^{l D^{un}}_{DT^{out}} \geq (1+\mathcal{D}_{max})^{-(l D^{un})} > 0 .
\end{equation}
This is mainly due to the fact that every $l$ time steps, each edge is active for at least one time step. 
Since the nominal graph $\mathcal{G}_d$ is connected, it has a path of length at most $D^{un}$ from each node $v_{\ell}$ to each node $v_i$. 
Thus, at the first $l$ steps, we can select the first edge in this path (at the instant when it is active) and use self loops at the remaining instants; during the next $l$ time steps, we can select the second edge on this path and use self loops at the remaining instants; and so forth. 
From \eqref{lowerProf_no_oscil_dyn} we have that the probability $P^{l D^{un}}_{N\_ DT^{out}}$ that ``the token is not at node $v_i$ after $l D^{un}$ time steps'' is 
$
P^{lD^{un}}_{N\_ DT^{out}} \leq 1 - (1+\mathcal{D}_{max})^{-(l D^{un})} .
$
By extending this analysis, we choose $\varepsilon$ (where $0 < \varepsilon < 1$) for which it holds that 
\begin{equation}\label{varepsilon_condition}
\varepsilon \leq 1 - 2^{\log_2{p_0}} .
\end{equation}
We can state that for $\varepsilon$ which fulfills \eqref{varepsilon_condition} and after $\tau l D^{un}$ time steps where
\begin{equation}\label{windows_for_conv_1_no_oscil}
\tau \geq \Big \lceil \dfrac{\log{\varepsilon}}{\log{(1 - (1+\mathcal{D}^+_{max})^{-(l D^{un})})}} \Big \rceil ,
\end{equation}
the probability $P^{\tau}_{N\_ DT^{out}}$ that ``the token has not visited node $v_i$ after $\tau l D^{un}$ time steps'' is
$
P^{\tau}_{N\_ DT^{out}} \leq [P^{l D^{un}}_{N\_ T^{out}}]^{\tau} \leq \varepsilon .
$
As a result, after $\tau l D^{un}$ time steps (where $\tau$ fulfills \eqref{windows_for_conv_1_no_oscil}) the probability that ``the specific token $T_{\lambda}^{out}$ has visited node $v_i$ after $\tau l D^{un}$ time steps'' is equal to $1 - \varepsilon > p_0$, where $0 < p_0 < 1$, and $\varepsilon$ fulfills $\eqref{varepsilon_condition}$.

\section{Proof of Theorem~\ref{converge_Alg1}} 
\label{appendix:B} 

The intuition of the proof is the following. 
During the operation of Algorithm~\ref{algorithm_max}, let us consider the simple scenario where among the initial nodes' states there is only one ``leading mass'' and $n-1$ ``follower masses'' (see Definition~\ref{leadingmass_defin}). 
Initially, each node broadcasts its state variables. 
From Algorithm~\ref{algorithm_max_1a}, 
after $l D^{un}$ time steps the state variables of every node become equal to the leading mass (we assume that until the state variables of every node in the network are equal to the leading mass no mass variables merged, thus the leading mass did not change). 
Then, the follower masses perform random walk in the network. 
From Lemma~\ref{tokenvisit}, after $\tau l D^{un}$ time steps, where $\tau$ fulfills \eqref{windows_for_conv_1_no_oscil}, at least one follower mass will merge with the leading mass with high probability. 
Then, the node whose mass variables are the updated leading mass broadcasts its updated state variables.
Thus, the previous process is repeated until all the $n-1$ follower masses merge with the leading mass. 

During the operation of Algorithm~\ref{algorithm_max}, let us consider that there is only one ``leading mass'' and $n-1$ ``follower masses'' (the scenario of multiple leading masses can be proven identically). 
From Lemma~\ref{tokenvisit}, after $l D^{un} + \tau l D^{un}$ time steps, at least one follower mass will merge with the leading mass with probability at least $1 - \varepsilon$, where $\varepsilon$ is defined in $\eqref{varepsilon_condition}$. 
This means that after $(n-1) (l D^{un}) + (n-1) (\tau l D^{un})$ time steps, all $n-1$ follower masses will merge with the leading mass with probability $(1 - \varepsilon)^{n-1}$.
Choosing $\varepsilon$ to fulfill 
\begin{equation}\label{varepsilon_condition2}
\varepsilon \leq 1 - 2^{\frac{\log_2{p_0'}}{n-1}} , 
\end{equation}
and $\tau$ to fulfill \eqref{windows_for_conv_1_no_oscil}, we have that after $(n-1) (l D^{un}) + (n-1) (\tau l D^{un})$ time steps, all $n-1$ follower masses will merge with the leading mass with probability $(1 - \varepsilon)^{n-1} \geq p_0'$. 

As a result, during the operation of Algorithm~\ref{algorithm_max}, after $(n) (l D^{un}) + (n-1) (\tau l D^{un})$ time steps, each node $v_j$ is able to (i) calculate the optimal amount of data $w^*_j$ (shown in \eqref{eq:optimal_workload}) with probability $(1 - \varepsilon)^{n-1} \geq p_0'$, and (ii) cease transmissions after calculating $w^*_j$. 
Note here that the cases where (i) there are initially multiple leading masses, or (ii) follower masses perform random walk and two or more merge in order to form a new leading mass, can be proven identically. 

\vspace{-.3cm}



\bibliographystyle{IEEEtran}
\bibliography{bibliografia_consensus}

\end{document}